\title{An Optimal Algorithm for Shortest Paths in Unweighted Disk Graphs\footnote{This paper appeared in {\em Proceedings of the 33rd Annual European Symposium on Algorithms (ESA 2025)}.}}
    \author{Bruce W. Brewer}{Kahlert School of Computing, University of Utah, Salt Lake City, UT 84112, USA}{bruce.brewer@utah.edu}{https://orcid.org/0009-0008-2995-148X}{}
\author{Haitao Wang}{Kahlert School of Computing, University of Utah, Salt Lake City, UT 84112, USA}{haitao.wang@utah.edu}{https://orcid.org/0000-0001-8134-7409}{}
\authorrunning{B. W. Brewer and H. Wang} 
\keywords{disk graphs, weighted Voronoi diagrams, shortest paths} 
\DeclareMathOperator*{\argmin}{arg\,min}
\newcommand{\os}[1]{\overset{(#1)}}
\newcommand{\G}{G(S)} 
\newcommand{\R}{\text{Vor}} 
\newcommand{\E}{\mathcal{E}} 
\newcommand{\DT}{\mathcal{DT}} 
\newcommand{\vd}{\mathcal{VD}}
\newcommand{\N}{N_{\DT}}
\newcommand{\bbR}{\mathbb{R}}
\begin{document}

\maketitle

\begin{abstract}
Given in the plane a set $S$ of $n$ points and a set of disks centered at these points, the disk graph $\G$ induced by these disks has vertex set $S$ and an edge between two vertices if their disks intersect. Note that the disks may have different radii. We consider the problem of computing shortest paths from a source point $s\in S$ to all vertices in $\G$ where the length of a path in $\G$ is defined as the number of edges in the path. The previously best algorithm solves the problem in $O(n\log^2 n)$ time. A lower bound of $\Omega(n\log n)$ is also known for this problem under the algebraic decision tree model. In this paper, we present an $O(n\log n)$ time algorithm, which matches the lower bound and thus is optimal. Another virtue of our algorithm is that it is quite simple.
\end{abstract}

\section{Introduction}
\label{sec:intro}
Let $S$ be a set of $n$ points in the plane. Each point of $S$ is associated with a disk centered at the point. Note that the disks may have different radii. The {\em disk graph} $\G$ induced by these disks has vertex set $S$ and an edge between two vertices if their disks intersect. We consider the single-source-shortest-path (SSSP) problem of computing shortest paths from a source point $s \in S$ to all vertices in $\G$ where the length of a path in $\G$ is defined as the number of edges in the path.

The problem was studied previously. A straightforward solution is to first construct $\G$ explicitly and then run a breadth-first-search (BFS) algorithm, which takes $\Omega(n^2)$ time since $\G$ may have $\Omega(n^2)$ edges in the worst case. As indicated in \cite{ref:KaplanDy20,ref:KaplanTh23}, the problem can be solved by BFS with a dynamic nearest neighbor data
structure for the weighted Euclidean distance; with the data structure in \cite{ref:LiuNe22}, a randomized algorithm of $O(n\log^4 n)$ time can be achieved.
Klost~\cite{ref:KlostAn23} gave a (deterministic) algorithm of $O(n\log^2 n)$ time. On the other hand, it is known that the problem has an $\Omega(n\log n)$ time lower bound under the algebraic decision tree model even if all disks have the same radius~\cite{ref:CabelloSh15}.

In this paper, we present a new algorithm whose runtime is $O(n\log n)$, which matches the $\Omega(n\log n)$ lower bound and thus is optimal. Another virtue of the algorithm is that it is quite simple.
Indeed, the most complex step of our algorithm is constructing a static additively-weighted Voronoi diagram, which can be easily done using Fortune's sweeping algorithm~\cite{ref:FortuneA87}. Aside from that, the algorithm's description in the paper is self-contained.

Concurrently with our work, de Berg and Cabello have also achieved an $O(n \log n)$ time algorithm using a different approach~\cite{ref:deBergAn25}.

\subparagraph{Related work.}
Disk graphs are among the most well-studied topics in computational geometry due to their applications and geometric properties, e.g.,~\cite{ref:AnFa23,ref:ClarkUn90,ref:ChanFa25,ref:LokshtanovSu22,ref:BaumannDy24,ref:EspenantFi23,ref:CabelloMi21,ref:TkachenkoDo25}. The SSSP problem we consider is actually an {\em unweighted case} because the length of each edge of $\G$ is defined to be one. In the {\em weighted} case, the length of each edge is defined to be the Euclidean distance between the two disk centers corresponding to the two vertices of the edge. Kaplan, Katz, Saban, and Sharir~\cite{ref:KaplanTh23} also gave an algorithm for the weighted case, and the algorithm runs in $O(n\log^6 n)$ time. An, Oh, and Xue~\cite{ref:AnSi25} very recently derived an improved algorithm of $O(n\log^4 n)$ time.

The SSSP problem on {\em unit-disk graphs} (in which all disks have the same radius) has also been extensively studied. Cabello and {Jej\v ci\v c}~\cite{ref:CabelloSh15} first solved the unweighted case in $O(n \log n)$ time, matching the $\Omega(n \log n)$ lower bound~\cite{ref:CabelloSh15}. If the points are presorted by $x$- and $y$-coordinates, Chan and Skrepetos~\cite{ref:ChanAl16} showed that it can be solved in $O(n)$ additional time. The weighted case was first solved by Roditty and Segal~\cite{ref:RodittyOn11} in $O(n^{4/3 + \epsilon})$ time for any $\epsilon > 0$. Later, Cabello and {Jej\v ci\v c}~\cite{ref:CabelloSh15} solved it in $O(n^{1 + \epsilon})$ time. The algorithm by Cabello and {Jej\v ci\v c}~\cite{ref:CabelloSh15} is bottlenecked by a dynamic bicromatic closest pair data structure, which has since seen improvements by Kaplan, Mulzer, Roditty, Seiferth, and Sharir~\cite{ref:KaplanDy20} and also by~Liu~\cite{ref:LiuNe22}. Recently, Wang and Xue~\cite{ref:WangNe20} gave a new algorithm of $O(n \log^2 n)$ time. This has been slightly reduced to $O(n \log^2 n / \log \log n)$ by Brewer and Wang~\cite{ref:BrewerAn24}. Furthermore, Brewer and Wang~\cite{ref:BrewerAn24} proved that the problem can be solved using $O(n\log n)$ comparisons under the algebraic decision tree model, matching an $\Omega(n\log n)$ lower bound in the same model~\cite{ref:CabelloSh15}.

Kaplan, Katz, Saban, and Sharir~\cite{ref:KaplanTh23} also studied a ``reversed version'' of the shortest path problem in disk graphs. Chan and Huang~\cite{ref:ChanFa25} very recently improved the algorithm of \cite{ref:KaplanTh23}. Both algorithms used the previous SSSP algorithm for disk graphs as a decision procedure. With our new SSSP algorithm, these algorithms can be slightly improved and also simplified.

In the following, after introducing some notation and concepts in Section~\ref{sec:pre}, we present our algorithm in Section~\ref{sec:algo}.

\section{Preliminaries}
\label{sec:pre}
We follow the notation already defined in Section~\ref{sec:intro}, e.g., $n$, $S$, $s$, $\G$.

We define $S_i$ for $i = 0, 1, 2, \ldots$ to be the set of vertices of $\G$ whose distance from $s$ in $\G$ is equal to $i$. Note that $S_0=\{s\}$. For convenience, we define $S_{\leq i} = \cup_{j \leq i} S_j$.

For any two points $p,q$ in the plane, let $|pq|$ denote the (Euclidean) distance between $p$ and $q$.

To differentiate from other points in the plane, we refer to points of $S$ as {\em sites}. For any site $v\in S$, let $D_v$ denote the input disk centered at $v$ and $r_v$ the radius of $D_v$. Notice that two sites $u,v\in S$ have an edge $(u,v)$ in $\G$ if and only if $|uv|-r_u-r_v\leq 0$. We consider $r_v$ as the {\em weight} of $v$, and for any point $p\in \bbR^2$, we define the {\em (additively) weighted distance} from $p$ to $v$ as $d_v(p)=|vp|-r_v$. Note that if $p$ is outside the disk $D_v$, then $d_v(p)$ is the smallest distance from $p$ to any point of $D_v$.

For any subset $S'\subseteq S$ and a point $p\in \bbR^2$,
we define a \emph{nearest neighbor} of $p$ to be a site in $S'$ whose weighted distance to $p$ is minimal. Formally, we define $\beta_p(S') = \argmin_{v \in S'} d_v(p)$ to be a nearest neighbor of $p$ in $S'$. In cases where $p$ has more than one nearest neighbor in $S'$, we let $\beta_p(S')$ refer to an arbitrary one. We also define $d_{S'}(p)=\min_{v \in S'}d_v(p)$.

With respect to the weights $r_v$ of the sites $v$ of $S$, we will make heavy use of the {\em (additively) weighted Voronoi diagram} of $S$, which is a partition of the plane into Voronoi regions, edges, and vertices~\cite{ref:FortuneA87,ref:SharirIn85}. The Voronoi region $\R(u)$ of a site $u \in S$ is usually defined as the set of points nearer to $u$ than to any other site in $S$. For convenience, we will let $\R(u)$ also include its boundary, defining $\R(u)$ as the set of points at least as near to $u$ than to any other site in $S$. Formally, $\R(u) = \{p \in \bbR^2: \forall\ v \in S \setminus \{u\},\ d_u(p) \leq d_v(p)\}$. This way, the plane is covered by the Voronoi regions. The Voronoi edge between two sites $u, v \in S$ is defined by $\E(u, v) = \{p \in \bbR: \forall\ w \in S \setminus \{u, v\}, d_u(p) = d_v(p) < d_w(p)\}$. Note that $\E(u, v)$ may have multiple connected components, each of which is a straight line segment or a hyperbolic arc~\cite{ref:FortuneA87,ref:SharirIn85}. Voronoi vertices are the points that are equidistant to at least three sites in $S$ such that all other sites in $S$ are farther than these. We use $\vd(S)$ to denote the weighted Voronoi diagram of $S$.

Let $\DT(S)$ denote the dual graph of $\vd(S)$. Specifically, $S$ is the vertex set of $\DT(S)$, and the edge set of $\DT(S)$ consists of all pairs of sites $u, v \in S$ whose Voronoi regions share a Voronoi edge, i.e., $\E(u, v) \neq \emptyset$. Note that if the sites of $S$ all have the same weight (i.e., all disks have the same radius), then $\DT(S)$ is the Delaunay triangulation of $S$. We define the neighborhood $\N(u)$ of a site $u \in S$ to be the set of sites that share an edge with $u$ in $\DT(S)$. By extension, we define the neighborhood $\N(S')$ of a set $S' \subseteq S$ by $\N(S') = \bigcup_{u \in S'} \N(u)$.

For ease of exposition, we make the general position assumption that no three sites of $S$ lie on the same line. The assumption can be lifted without affecting the results, e.g., by standard perturbation techniques~\cite{ref:EdelsbrunnerSi90,ref:YapA90}.

\section{The algorithm}
\label{sec:algo}
In this section, we present our algorithm to compute the sets $S_i$ for $i=0,1,2,\ldots$. It is easy to extend our algorithm to also compute predecessor information for the shortest paths.

We follow the same algorithmic scheme of Cabello and {Jej\v ci\v c}~\cite{ref:CabelloSh15} for unit-disk graphs and manage to extend their algorithm to our general disk graph case. In what follows, we first describe the algorithm, then discuss the implementation and time analysis, and finally prove its correctness.

We assume that no disk fully contains another disk in the input. With this assumption, every site of $S$ defines a non-empty Voronoi region in $\vd(S)$~\cite{ref:SharirIn85}. We will later explain that the algorithm can be slightly modified to accommodate the situation in which the assumption does not hold.

\subsection{Algorithm description}
We begin with constructing the weighted Voronoi diagram $\vd(S)$. Then the algorithm runs in iterations, and each $i$-th iteration will compute the set $S_i$. Initially, we have $S_0=\{s\}$. In the $i$-th iteration for $i\geq 1$, we compute $S_i$ as follows, assuming that $S_{\leq i-1}$ has already been computed. Let $Q=\N(S_{i-1})\setminus S_{\leq i-1}$, i.e., the neighbors of $S_{i - 1}$ in $\DT(S)$ that are not in $S_{i - 1}$. We remove a site $v\in Q$ from $Q$ and determine whether $d_{S_{i-1}}(v)\leq r_v$. If it is, then we add $v$ to $S_i$ and update $Q=Q\cup (\N(v)\setminus S_{\leq i})$, i.e., we add the neighbors of $v$ in $\DT(S)$ that are not in $Q\cup S_{\leq i}$ to $Q$. We repeat this until $Q=\emptyset$, at which point the $i$-th iteration is complete. After the $i$-th iteration, we check if $S_i = \emptyset$. If so, then we are done, and the algorithm stops. Otherwise, we continue to the $i+1$-th iteration. Algorithm~\ref{algo:sssp} gives the pseudocode.

\begin{algorithm}
    \caption{SSSP algorithm} \label{algo:sssp}
    \KwIn{A set of point sites $S$, radius $r_v$ for each site $v\in S$, source site $s \in S$.}
    \KwOut{The sets $S_i$ for $i = 0, 1, \ldots$}
    $i \gets 0$\;
    $S_0 \gets \{s\}$\;
    Build $\vd(S)$ and $\DT(S)$\;
    \While{$S_i \neq \emptyset$}{
        $i \gets i + 1$\;
        $S_i \gets \emptyset$\;
        $Q \gets \N(S_{i - 1}) \setminus S_{\leq i-1}$ \label{line:Q1}\;
        \While{$Q \neq \emptyset$ \label{line:Q_while}}{
            $v \gets pop(Q)$\;
            \If{$d_{S_{i - 1}}(v) \leq r_v$ \label{line:check}}{
                $S_i \gets S_i \cup \{v\}$\; \label{line:add}
                $Q \gets Q \cup (\N(v) \setminus S_{\leq i})$ \label{line:Q2}\;
            }
        }
    }
    \Return{$S_i$ for $i = 0, 1, \ldots$\;}
\end{algorithm}

\subsection{Algorithm implementation and time analysis}
We now discuss how to implement the algorithm efficiently.

First of all, computing $\vd(S)$ and thus $\DT(S)$ can be done in $O(n\log n)$ time~\cite{ref:FortuneA87}. Note that the combinatorial sizes of both $\vd(S)$ and $\DT(S)$ are $O(n)$~\cite{ref:FortuneA87,ref:SharirIn85}.

Next, notice that the number of vertices added to $Q$ on Line~\ref{line:Q1} in iteration $i$ is at most the sum of the degrees of the vertices of $S_{i - 1}$ in the graph $\DT(S)$. Because the sets $S_i$ are disjoint and $\DT(S)$ has $O(n)$ edges, this means that the number of vertices added to $Q$ by Line~\ref{line:Q1} throughout the whole algorithm is $O(n)$. By the same reasoning, we conclude that the number of vertices added to $Q$ by Line~\ref{line:Q2} throughout the whole algorithm is also $O(n)$. We can also know that the while loop on Line~\ref{line:Q_while} has at most $O(n)$ iterations throughout the whole algorithm because a vertex is removed from $Q$ in each iteration.

It remains to determine the time complexity for the operation of checking whether $d_{S_{i-1}}(v)\leq r_v$ on Line~\ref{line:check}. To this end, it suffices to find $\beta_v(S_{i-1})$, i.e., the nearest neighbor of $v$ in $S_{i-1}$. To find $\beta_v(S_{i-1})$, in the beginning of the $i$-th iteration, we compute the weighted Voronoi diagram for $S_{i-1}$, denoted by $\vd(S_{i-1})$, which takes $O(|S_{i-1}|\log |S_{i-1}|)$ time~\cite{ref:FortuneA87}, and then construct a point location data structure for $\vd(S_{i-1})$ in $O(|S_{i-1}|)$ time~\cite{ref:KirkpatrickOp83,ref:EdelsbrunnerOp86}. After that, $\beta_v(S_{i-1})$ can be found in $O(\log n)$ time by a point location query with $v$ on $\vd(S_{i-1})$. As such, the total time of Line~\ref{line:check} in the whole algorithm is $O(n\log n)$. As $\sum_i|S_{i-1}|=n$, the overall time for constructing $\vd(S_{i-1})$ in the whole algorithm is $O(n\log n)$.

We thus conclude that the time complexity of the whole algorithm is $O(n\log n)$.

\subsection{Algorithm correctness}
We start with proving the following two observations that our algorithm's correctness relies on.

\begin{observation}\label{obser:weightdis}
    For any two sites $u,v\in S$, they are connected by an edge in $\G$ if and only if $d_v(u)\leq r_u$ or $d_u(v)\leq r_v$.
\end{observation}
\begin{proof}
    Recall that $u$ and $v$ have an edge if and only if $|uv|-r_u-r_v\leq 0$.
    Notice that $|uv|-r_u-r_v\leq 0$ holds if and only if $d_v(u)\leq r_u$ or $d_u(v)\leq r_v$.
\end{proof}

\begin{observation}\label{obser:nearneighbor}
    For any site $v\in S\setminus S_{\leq i-1}$, $v$ is in $S_i$ if and only if $d_{S_{i-1}}(v)\leq r_v$.
\end{observation}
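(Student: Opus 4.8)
The plan is to prove the two directions of the biconditional in Observation~\ref{obser:nearneighbor} separately, relying on Observation~\ref{obser:weightdis} and on the characterization of $S_i$ as the set of sites at graph-distance exactly $i$ from $s$. Throughout, fix $v \in S \setminus S_{\leq i-1}$, so the graph-distance from $s$ to $v$ is at least $i$.

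First I would handle the ``only if'' direction. Suppose $v \in S_i$. Then there is a shortest path in $\G$ from $s$ to $v$ of length $i$, whose second-to-last vertex $u$ lies in $S_{i-1}$. Since $(u,v)$ is an edge of $\G$, Observation~\ref{obser:weightdis} gives $d_v(u) \le r_u$ or $d_u(v) \le r_v$. If $d_u(v) \le r_v$, then $d_{S_{i-1}}(v) \le d_u(v) \le r_v$ and we are done. The remaining case $d_v(u) \le r_u$ is the subtle one: I claim it actually cannot happen here, or more precisely that it still forces the desired inequality. The point is that $d_v(u) \le r_u$ means $v$ lies in the disk $D_u$; but then for \emph{any} site $w$, having $v \in D_u$ does not immediately bound $d_w(v)$. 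Instead I would argue directly: $d_v(u) \le r_u$ together with $u \in S_{i-1} \subseteq S_{\le i-1}$ and $v \notin S_{\le i-1}$ — hmm, this needs $v$'s disk to reach $u$, which is exactly the condition $d_v(u)\le r_u$, i.e. $|uv| \le r_u$... This does not by itself give $d_u(v)\le r_v$. So the honest route is: the edge $(u,v)$ exists, so $|uv| - r_u - r_v \le 0$, hence $d_u(v) = |uv| - r_v \le r_u$. That is not quite $\le r_v$ either. The clean resolution, which I expect to be the crux of the argument, is that if $d_v(u) \le r_u$ then $v \in D_u$, and then every site $w \in S$ with $w$ adjacent to $u$ is also adjacent to $v$; in particular one should instead observe that when $v\in D_u$ we actually have $d_{S_{i-1}}(v) \le d_u(v) = |uv| - r_v \le r_u - r_v$, which is $\le r_v$ only if $r_u \le 2r_v$. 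So the truly correct statement must be extracted from how the algorithm/graph distance behaves, not from this local inequality alone; I would therefore re-derive it as: $v \in S_i$ implies $v$ has a neighbor $u$ in $S_{i-1}$, and then show $d_{S_{i-1}}(v) \le r_v$ by noting that among $u$ and the nearest neighbor $\beta_v(S_{i-1})$, the defining inequality of $\beta$ gives $d_{S_{i-1}}(v) = d_{\beta_v(S_{i-1})}(v) \le d_u(v)$, so it remains to show $d_u(v)\le r_v$ whenever $(u,v)\in\G$ — which is false in general, so the real proof must use that $u$ can be chosen with $d_u(v) \le r_v$. I would establish exactly this: if $(u,v)$ is an edge then at least one of $d_u(v)\le r_v$, $d_v(u)\le r_u$ holds, and in the latter case $u \in D_v$ too is false; rather $v\in D_u$, and then $d_u(v) = |uv|-r_v \le r_u - r_v$; if additionally this is positive it is $<r_u$; the escape is that if $v \in D_u$ with $u \in S_{i-1}$ then actually $v$ would have been pulled in earlier unless $i-1 = 0$... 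This is getting circular, so let me state the plan cleanly below.

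The cleanest plan: For the ``if'' direction, assume $d_{S_{i-1}}(v) \le r_v$. Let $u = \beta_v(S_{i-1}) \in S_{i-1}$, so $d_u(v) \le r_v$, hence by Observation~\ref{obser:weightdis} $(u,v)$ is an edge of $\G$. Since $u \in S_{i-1}$, the graph-distance from $s$ to $v$ is at most $i$; since $v \notin S_{\le i-1}$, it is at least $i$; therefore it equals $i$ and $v \in S_i$. For the ``only if'' direction, assume $v \in S_i$; take a shortest path $s, \ldots, u, v$ with $u \in S_{i-1}$, so $(u,v) \in \G$. By Observation~\ref{obser:weightdis}, $d_u(v) \le r_v$ or $d_v(u) \le r_u$. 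In the first case $d_{S_{i-1}}(v) \le d_u(v) \le r_v$ and we are done. In the second case, $d_v(u) \le r_u$ means $|uv| \le r_u + 0$, i.e. $d_u(v) = |uv| - r_v \le r_u - r_v$; I then split on the sign: if $r_u \ge r_v$ this does not finish, so instead I observe that $d_v(u)\le r_u$ says $u\in D_v$ is wrong — it says $v$'s \emph{center} is within $r_u$ of $u$, i.e. $|uv|\le r_u$, which since $r_v\ge 0$ gives $|uv|\le r_u + r_v$ trivially but more to the point gives $d_u(v)=|uv|-r_v\le r_u-r_v$. The key realization I will use: in this second case we may replace the shortest path's choice of $u$, because $|uv| \le r_u$ combined with the triangle-inequality-free structure of weighted distance means the Voronoi nearest site $w=\beta_v(S_{i-1})$ satisfies $d_w(v) \le d_u(v) = |uv| - r_v \le r_u - r_v \le r_u$; and if this quantity exceeds $r_v$ we reach a contradiction with minimality of $i$ because then $u$ and $v$ would be within distance $r_u$ so $v\in D_u$, forcing... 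I expect this last bookkeeping to be the main obstacle, and the resolution is simply that $d_v(u)\le r_u$ with $|uv|\le r_u$ implies $v\in D_u$, and a site inside $D_u$ is adjacent to \emph{every} neighbor of $u$ in $\G$; combined with $u\in S_{i-1}$ this makes $v\in S_{\le i}$ with the correct index, and one checks $d_u(v)\le r_v$ must in fact hold by re-examining: $v\in D_u$ and $u\in D_v$ cannot both fail when $(u,v)\in\G$... Actually $(u,v)\in\G$ is precisely $|uv|\le r_u+r_v$, which is \emph{weaker} than either $v\in D_u$ or $u\in D_v$. Hence the honest second case genuinely has $v\notin D_u$ possible, and then $d_u(v) = |uv|-r_v > r_u - (r_u+r_v-|uv|)\ge\ldots$ I will not grind this; the plan is to show that in the second case one can always find a \emph{different} neighbor of $v$ in $S_{i-1}$ witnessing $d_\cdot(v)\le r_v$, or else directly that $d_v(u)\le r_u$ is impossible for an edge on a shortest path from $s$, using that $s\notin D_v$-type reasoning propagates inductively. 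The main obstacle is thus cleanly isolating why the ``$d_v(u)\le r_u$'' case reduces to the ``$d_u(v)\le r_v$'' case; I anticipate the fix is a short inductive argument on $i$ showing every site $v\notin S_{\le i-1}$ that is adjacent to some $u\in S_{i-1}$ satisfies $d_u(v)\le r_v$ for that same $u$ or for another site in $S_{i-1}$, which follows because $v\notin S_{\le i-1}$ already rules out $v$ being inside any disk $D_u$ with $u\in S_{\le i-2}$.
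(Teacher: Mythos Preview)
Your ``if'' direction is correct and matches the paper exactly. The problem is entirely in the ``only if'' direction, where you have tied yourself in knots over a non-issue.

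The two conditions in Observation~\ref{obser:weightdis} are not a genuine disjunction: by the paper's definition, $d_v(p)=|vp|-r_v$, so
\[
d_v(u)\le r_u \iff |uv|-r_v\le r_u \iff |uv|\le r_u+r_v \iff |uv|-r_u\le r_v \iff d_u(v)\le r_v.
\]
Hence whenever $(u,v)$ is an edge of $\G$, \emph{both} inequalities hold automatically. The paper's proof of the ``only if'' direction is therefore one line: take any $u\in S_{i-1}$ adjacent to $v$; the edge gives $d_u(v)\le r_v$; and $d_{S_{i-1}}(v)\le d_u(v)\le r_v$.

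Your long attempt to ``reduce the second case to the first'' is unnecessary, and the computations you wrote are based on a misreading of the definition: you repeatedly use $d_u(v)=|uv|-r_v$ and $d_v(u)\le r_u\Rightarrow |uv|\le r_u$, whereas in fact $d_u(v)=|uv|-r_u$ and $d_v(u)=|uv|-r_v$. Once you correct this, the entire ``subtle case'' evaporates.
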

\begin{proof}
    By definition, $S_{i-1}$ has a site $u$ with $d_u(v)=d_{S_{i-1}}(v)$. If $d_{S_{i-1}}(v)\leq r_v$, then $d_u(v)\leq r_v$. By Observation~\ref{obser:weightdis}, $\G$ has an edge connecting $u$ and $v$. As $u\in S_{i-1}$ and $v\in S\setminus S_{\leq i-1}$, we obtain that $v\in S_i$.

    If $v\in S_i$, then there is an edge in $\G$ connecting $v$ to a site $u\in S_{i-1}$. By Observation~\ref{obser:weightdis}, $d_u(v)\leq r_v$. It then follows that $d_{S_{i-1}}(v)\leq d_u(v)\leq r_v$.
\end{proof}

We now prove that the algorithm is correct. Let $S_i'$ be the set $S_i$ computed by our algorithm and let $S_i$ be the ``correct'' set of vertices at distance $i$ from $s$ in $\G$. Then, our goal is to prove that $S_i' = S_i$ for all $i$. We do so by induction on $i$. Our base case is that $S_0' = S_0$, which is obviously true because they are both $\{s\}$. Our inductive hypothesis is that $S_j' = S_j$ for all $j < i$. To prove that $S_i' = S_i$, we will argue both $S_i' \subseteq S_i$ and $S_i' \supseteq S_i$.

First of all, since $S_j'=S_j$ for all $j<i$, our algorithm guarantees that every site of $S_i'$ is at distance $i$ from $s$ in $\G$ and thus $S_i'\subseteq S_i$. To see this, a site $v$ is added to $S_i'$ in Line~\ref{line:add} because $d_{S_{i-1}}(v)\leq r_v$. By Observation~\ref{obser:nearneighbor}, $v\in S_i$.

We next prove that $S_i' \supseteq S_i$. Consider a site $v\in S_i$. Our goal is to show that $v$ is also in $S_i'$.
We will prove in \cref{lem:path} that there exists a vertex $u \in S_{i - 1}$ and a $u$-$v$ path $\pi$ in $\DT(S)$ such that all internal vertices of $\pi$ are in $S_i$ (note that {\em internal vertices} refer to vertices of $\pi$ excluding $u$ and $v$). Let the vertices of $\pi$ be $u=w_0, w_1, \ldots, w_k=v$. We argue that all $w_j$ with $1 \leq j \leq k$ are added to $S_i'$. To this end, because $w_j \in S_i$ for all $1 \leq j \leq k$, it is sufficient to argue that $w_j$ is added to $Q$ because  when $w_j$ is popped from $Q$, it will pass the check on Line~\ref{line:check} by \cref{obser:nearneighbor} and then be added to $S_i'$. Indeed, $w_1$ is added to $Q$ on Line~\ref{line:Q1} because $w_1 \in \N(u)$.
Since $w_2\in \N(w_1)$, $w_2$ will be added to $Q$ on Line~\ref{line:Q2} no later than right after $w_1$ is added to $S'_i$ on Line~\ref{line:add}. Following the same argument, $w_3,w_4,\ldots,w_k=v$ will all be added to $Q$ and thus added to $S_i'$ as well. This proves that $v\in S_i'$.
Therefore, $S_i' \supseteq S_i$.

In summary, we have proved $S_i' = S_i$ and thus the correctness of the algorithm.

\begin{lemma} \label{lem:path}
    For any $i > 0$ and any site $v \in S_i$, there exist a site $u \in S_{i - 1}$ and a $u$-$v$ path in $\DT(S)$ such that all internal vertices of the path are in $S_i$.
\end{lemma}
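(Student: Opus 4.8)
The plan is to realize the required path as the sequence of Voronoi regions of $\vd(S)$ that a straight segment crosses. Fix $i\ge 1$ and $v\in S_i$. By \cref{obser:nearneighbor} we have $d_{S_{i-1}}(v)\le r_v$; let $u$ be a nearest site to $v$ in $S_{i-1}$, so $d_u(v)=d_{S_{i-1}}(v)\le r_v$ and hence $|uv|=d_u(v)+r_u\le r_u+r_v$. Consider the segment $\sigma=\overline{vu}$. Under the general position assumption (no input disk is contained in the interior of another, and $\sigma$ avoids all Voronoi vertices), each of $v$ and $u$ lies in the interior of its own Voronoi region, so $\sigma$ passes through a sequence of regions $\R(a_0),\R(a_1),\ldots,\R(a_m)$ with $a_0=v$ and $a_m=u$ (and $m\ge 1$ since $v\ne u$). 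Each two consecutive regions share a Voronoi edge, so $a_ja_{j+1}\in\DT(S)$ for all $j$, and $a_0,a_1,\ldots,a_m$ is a walk in $\DT(S)$ from $v$ to $u$.

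The key step is to prove that every $a_j$ belongs to $S_{i-1}\cup S_i$. Pick a point $p\in\R(a_j)\cap\sigma$; then $d_{a_j}(p)\le d_y(p)$ for every $y\in S$, in particular for $y\in\{u,v\}$. From $d_{a_j}(p)\le d_v(p)$ we get $|a_jp|\le|vp|-r_v+r_{a_j}$, and since $p$ lies on $\sigma$ we have $|vp|+|pu|=|vu|$; hence $|a_ju|\le|a_jp|+|pu|\le|vu|-r_v+r_{a_j}\le r_{a_j}+r_u$, so $D_{a_j}\cap D_u\neq\emptyset$ and $a_j$ is adjacent in $\G$ to $u\in S_{i-1}$; thus $a_j\in S_{\le i}$. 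Symmetrically, $d_{a_j}(p)\le d_u(p)$ and $|up|+|pv|=|vu|$ give $|a_jv|\le|vu|-r_u+r_{a_j}\le r_{a_j}+r_v$, so $a_j$ is adjacent in $\G$ to $v\in S_i$ and therefore $a_j\notin S_{\le i-2}$. Combining, $a_j\in S_{i-1}\cup S_i$.

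To conclude, let $j_0$ be the smallest index $j\ge1$ with $a_j\in S_{i-1}$; this exists because $a_m=u\in S_{i-1}$. Every $a_j$ with $1\le j<j_0$ lies in $S_{i-1}\cup S_i$ but not in $S_{i-1}$, hence in $S_i$. Extracting a simple path from the sub-walk $v=a_0,a_1,\ldots,a_{j_0}$ yields a path in $\DT(S)$ from $v$ to $a_{j_0}\in S_{i-1}$ whose internal vertices lie among $a_1,\ldots,a_{j_0-1}$ and are therefore all in $S_i$; this is the claimed path, with $a_{j_0}$ in the role of the lemma's $u$.

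I expect the crux to be the middle paragraph, i.e.\ pinning the internal vertices down to $S_i$ exactly, ruling out both $S_{\le i-2}$ and $S_{i+1}$. The essential observation is that to bound $|a_ju|$ one must use that $a_j$ is no farther from the crossing point $p$ than $v$ is (not than $u$ is), together with the collinearity $|vp|+|pu|=|vu|$; using the $u$-side estimate instead would only give the useless bound $|a_ju|\le 2|vu|$, whereas the $v$-side estimate forces $|a_ju|\le|vu|$ and hence $\G$-adjacency of $a_j$ to $u$ (and symmetrically to $v$). A secondary technical point to dispatch first is the degenerate case of nested disks: if $D_v$ with $v\neq s$ is contained in the interior of another disk, then $v$ is isolated in $\DT(S)$ and the statement fails, so one must either fold ``no disk contained in another'' into the general position assumption, or else verify that a non-dominated nearest site $u\in S_{i-1}$ adjacent to $v$ always exists and use that $u$ above.
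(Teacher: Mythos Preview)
Your proof is correct and follows essentially the same route as the paper: take $u=\beta_v(S_{i-1})$, trace the segment $\overline{uv}$ through the cells of $\vd(S)$ to get a walk in $\DT(S)$, and use the two triangle-inequality estimates (comparing $a_j$ to $v$ at the crossing point to bound $|a_ju|$, and to $u$ to bound $|a_jv|$) to show every intermediate site is $\G$-adjacent to both endpoints. The only cosmetic difference is in how $S_{i-1}$ is excluded: the paper invokes the no-three-collinear assumption to make the inequality $d_{w_j}(v)<d_u(v)$ strict and thereby contradict $u=\beta_v(S_{i-1})$, whereas you simply truncate the walk at the first vertex that lands in $S_{i-1}$---your variant is arguably cleaner, and your caveat about nested disks (empty Voronoi cells) is a genuine degenerate case that the paper also leaves implicit.
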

\begin{proof}
    We extend the proof of \cite[Lemma 1]{ref:CabelloSh15} for the unit-disk graph case to our general disk graph problem.

   Let $u =\beta_v(S_{i-1})$, i.e., $u$ is a nearest neighbor of $v$ in $S_{i-1}$.
    Consider the line segment $\overline{uv}$ and the sites whose Voronoi regions in $\vd(S)$ intersect $\overline{uv}$. Let $w_0, w_1, \ldots, w_k$ be these sites in the order that their Voronoi regions intersect $\overline{uv}$ with $w_0 = u$ and $w_k = v$. Note that repetitions are possible, so it could be that $w_j = w_{j'}$ for $j \neq j'$. See \cref{fig:path} for an example. For ease of discussion, we assume that $\overline{uv}$ does not contain any Voronoi vertices of $\vd(S)$ (otherwise, we could replace $v$ by a point arbitrarily close to $v$ and then follow the same argument). This implies that, for every $0\leq j\leq k-1$, $\overline{uv}$ crosses the Voronoi edge between $w_j$ and $w_{j + 1}$. This further implies that $w_j$ and $w_{j + 1}$ are adjacent in $\DT(S)$, so $\pi = w_0, w_1, \ldots, w_k$ is a $u$-$v$ path in $\DT(S)$.

    \begin{figure}[t]
        \centering
        \includegraphics[height = 5cm]{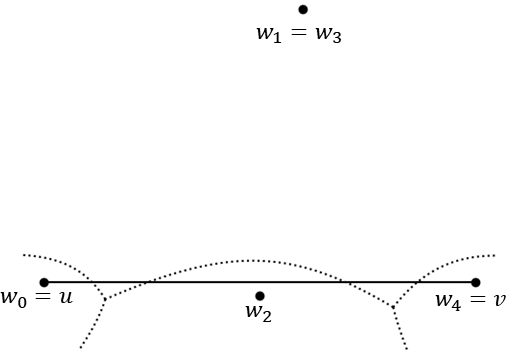}
        \caption{The segment $\overline{uv}$ passes through $\R(u)$, $\R(w_1)$, $\R(w_2)$, $\R(w_3)$, and $\R(v)$, where $w_1 = w_3$.}
        \label{fig:path}
    \end{figure}

    Next, we need to show that all internal vertices $w_j$ ($1 \leq j \leq k - 1$) of $\pi$ are in $S_i$. We will first prove that there is an edge $(u,w_j)$ in $G(S)$ connecting $u$ and $w_j$, meaning that $w_j \in S_{i - 2} \cup S_{i - 1} \cup S_i$ since $u\in S_{i-1}$. Then, we will show that $w_j \notin S_{i - 2} \cup S_{i - 1}$, so $w_j \in S_i$ must hold. This will complete the proof of the lemma.

    \begin{figure}[t]
        \centering
        \includegraphics[height = 2.5cm]{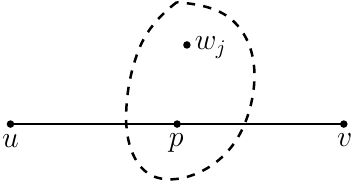}
        \caption{The region bounded by the dashed curve is $\R(w_j)$.}
        \label{fig:segment}
    \end{figure}

\subparagraph{Proving $\boldsymbol{G(S)}$ has an edge $\boldsymbol{(u,w_j)}$.}
We first argue that $G(S)$ has an edge $(u,v)$. Indeed, since $v\in S_i$,
there is a vertex $u'\in S_{i-1}$ connecting to $v$ by an edge in $\G$. By Observation~\ref{obser:weightdis}, $d_{u'}(v)\leq r_v$. Since $u=\beta_v(S_{i-1})$, we have $d_u(v)\leq d_{u'}(v)\leq r_v$. By Observation~\ref{obser:weightdis}, $\G$ has an edge $(u,v)$.

Consider any $w_j$ with $1\leq j\leq k-1$. We now argue that $G(S)$ has an edge $(u,w_j)$.
    Recall that $\overline{uv}$ intersects $\R(w_j)$. Let $p$ be a point of $\overline{uv} \cap \R(w_j)$; see Figure~\ref{fig:segment}. We can derive the following:
    \[d_{w_j}(u) = |w_ju|-r_{w_j} \os 1 \leq |w_j p| + |p u| - r_{w_j} \os 2 \leq |v p| + |p u| - r_v \os 3 = |vu|-r_v=  d_v(u) \os 4 \leq r_u,\]
    where (1) is due to the triangle inequality, (2) is due to $p \in \R(w_j)$ (therefore, $|w_j p| - r_{w_j} = d_{w_j}(p) \leq d_v(p) = |v p| - r_v$), (3) is due to $p \in \overline{u v}$, and (4) is due to \cref{obser:weightdis} and the fact that $\G$ has an edge $(u,v)$. Consequently, by \cref{obser:weightdis}, $G(S)$ has an edge $(u,w_j)$.

\subparagraph{Proving $\boldsymbol{w_j \notin S_{i - 2} \cup S_{i - 1}}$.}
    To prove that $w_j \notin S_{i - 2} \cup S_{i - 1}$, $1\leq j\leq k-1$, as before, we pick a point $p \in \overline{u v} \cap \R(w_j)$ and derive the following:
    \[d_{w_j}(v) \os 1 < |w_j p| + |p v| - r_{w_j} \os 2 \leq |u p| + |p v| - r_u \os 3 = |uv| - r_u= d_u(v) \os 4 \leq r_v,\]
    where (1) is due to the triangle inequality and our general position assumption (because $p \in \overline{u v}$, equality occurs only if $u$, $v$, and $w_j$ are collinear, violating our general position assumption), (2) is due to $p \in \R(w_j)$ (therefore, $|w_j p| - r_{w_j} = d_{w_j}(p) \leq d_u(p) = |u p| - r_u$), (3) is due to $p \in \overline{u v}$, and (4) is due to \cref{obser:weightdis} and the fact that $\G$ has an edge $(u,v)$, as proven above.

    We find that $w_j \notin S_{i - 1}$ since otherwise $d_{w_j}(v) < d_u(v)$ proved above would contradict with $u = \beta_v(S_{i - 1})$. Also, since $G(S)$ has an edge $(w_j,v)$ and $v\in S_i$, $w_j$ cannot be in $S_{i-2}$ (since otherwise $v$ would be in $S_{\leq i-1}$, contradicting with that $v\in S_i$). This proves that $w_j \notin S_{i - 2} \cup S_{i - 1}$.
\end{proof}

\subparagraph{The containment case.}
The above discusses the case where no disk fully contains another disk in the input. Suppose that a disk is contained in another disk. Then, the Voronoi region of the center of the smaller disk in $\vd(S)$ is empty~\cite{ref:SharirIn85}, and our algorithm will miss the disk. To fix the issue, we can slightly revise our algorithm as follows. Let $v$ be a site of $S$ whose Voronoi region in $\vd(S)$ is empty. Let $u$ be the site of $S$ whose Voronoi region contains $v$ (note that the disk of $v$ must be contained in the disk of $u$). Then, we add $v$ to $\N(u)$ and also add $u$ to $\N(v)$. After doing this for all such $v$, we apply exactly the same algorithm as before. One can verify that the algorithm works even if the source site $s$ defines an empty Voronoi region in $\vd(S)$. The total time of the algorithm is still $O(n\log n)$.
\medskip

The following theorem summarizes our result.

\begin{theorem}
    Given a set of $n$ disks in the plane and a source disk, shortest paths from the source to all other vertices in the disk graph induced by all disks can be computed in $O(n\log n)$ time.
\end{theorem}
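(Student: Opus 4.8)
The plan is to observe that the theorem is essentially a direct corollary of everything already established in Section~\ref{sec:algo}, so the proof is short and mostly a matter of assembling the pieces. First I would note that Algorithm~\ref{algo:sssp} computes the sets $S_i$ of vertices at graph-distance exactly $i$ from $s$ in $\G$: correctness was proven by the induction on $i$ together with \cref{lem:path}, \cref{obser:weightdis}, and \cref{obser:nearneighbor}. Since the $S_i$ are pairwise disjoint and cover the connected component of $s$, and vertices not reached by any iteration are exactly those not connected to $s$ in $\G$, this gives the shortest-path distances (number of edges) from $s$ to all vertices. The extension to predecessor information was already remarked on at the start of the section: whenever a site $v$ is added to $S_i$ on Line~\ref{line:add} because $d_{S_{i-1}}(v)\le r_v$, we record $\beta_v(S_{i-1})$ as the predecessor of $v$, and by \cref{obser:weightdis} this site is genuinely adjacent to $v$ in $\G$, so chaining predecessors yields an actual shortest path.

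Next I would recall the running-time analysis verbatim from the implementation subsection. Building $\vd(S)$ and $\DT(S)$ takes $O(n\log n)$ time via Fortune's algorithm~\cite{ref:FortuneA87}, and both have size $O(n)$. Because the $S_i$ are disjoint and $\DT(S)$ has $O(n)$ edges, the total number of insertions into $Q$ over Lines~\ref{line:Q1} and~\ref{line:Q2} across all iterations is $O(n)$, hence the inner while loop executes $O(n)$ times in total. Each execution performs one check on Line~\ref{line:check}, which reduces to a point-location query for $\beta_v(S_{i-1})$ in $\vd(S_{i-1})$ costing $O(\log n)$, for $O(n\log n)$ overall; and the per-iteration cost of building $\vd(S_{i-1})$ and its point-location structure is $O(|S_{i-1}|\log|S_{i-1}|)$, which sums to $O(n\log n)$ since $\sum_i |S_{i-1}| = n$. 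Adding these contributions gives the claimed $O(n\log n)$ bound.

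I do not anticipate a genuine obstacle, since the theorem merely packages results already proven; the only point requiring a sentence of care is making sure the statement about "shortest paths" (rather than just distances) is justified, which is handled by the predecessor-recording remark above and \cref{obser:weightdis}. One could also add a brief line noting that the $\Omega(n\log n)$ lower bound of~\cite{ref:CabelloSh15} shows this bound is optimal, matching the abstract's claim, though that is not strictly part of the theorem's proof.

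\begin{proof}
By the correctness argument established above (the induction on $i$ using \cref{lem:path}, \cref{obser:weightdis}, and \cref{obser:nearneighbor}), Algorithm~\ref{algo:sssp} correctly computes, for every $i\ge 0$, the set $S_i$ of vertices whose shortest-path distance from $s$ in $\G$ is exactly $i$; vertices not placed in any $S_i$ are precisely those in a different connected component of $\G$ from $s$. To also report the shortest paths themselves, whenever a site $v$ is added to $S_i$ on Line~\ref{line:add} we store $\beta_v(S_{i-1})$ as its predecessor; since $d_{\beta_v(S_{i-1})}(v)=d_{S_{i-1}}(v)\le r_v$, \cref{obser:weightdis} guarantees that this predecessor is adjacent to $v$ in $\G$, so following predecessors from any vertex back to $s$ yields an actual shortest path.

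For the running time, building $\vd(S)$ and $\DT(S)$ takes $O(n\log n)$ time~\cite{ref:FortuneA87}, and both have combinatorial size $O(n)$~\cite{ref:FortuneA87,ref:SharirIn85}. Since the sets $S_i$ are pairwise disjoint and $\DT(S)$ has $O(n)$ edges, the total number of sites inserted into $Q$ on Lines~\ref{line:Q1} and~\ref{line:Q2} over the entire execution is $O(n)$, so the inner while loop on Line~\ref{line:Q_while} runs $O(n)$ times in total. At the start of iteration $i$ we build $\vd(S_{i-1})$ together with a point-location structure in $O(|S_{i-1}|\log|S_{i-1}|)$ time~\cite{ref:FortuneA87,ref:KirkpatrickOp83,ref:EdelsbrunnerOp86}; as $\sum_i |S_{i-1}| = n$, these costs sum to $O(n\log n)$. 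Each check on Line~\ref{line:check} is answered by one point-location query for $\beta_v(S_{i-1})$ in $O(\log n)$ time, contributing $O(n\log n)$ in total. Summing all contributions, the algorithm runs in $O(n\log n)$ time.
\end{proof}
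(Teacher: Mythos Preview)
Your proposal is correct and matches the paper's approach: the theorem is stated in the paper as a summary with no separate proof, since correctness (via the induction using \cref{lem:path}, \cref{obser:weightdis}, \cref{obser:nearneighbor}) and the $O(n\log n)$ running-time analysis are exactly the contents of the preceding subsections, and the predecessor-recording for actual paths is the remark at the start of Section~\ref{sec:algo}. Your write-up simply assembles those pieces, which is precisely what the paper intends.
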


\subparagraph{Extension to the $\boldsymbol{L_{\infty}}$ (resp., $\boldsymbol{L_1}$) metric.}
In the $L_{\infty}$ metric, each disk becomes a square centered at a site of $S$. For this case, an $O(n\log n)$-time algorithm was given by Klost~\cite{ref:KlostAn23} to solve the SSSP problem, which uses the involved techniques in \cite{ref:BaumannDy24}. We can solve this problem in $O(n\log n)$ time in a much simpler way by making the following changes to Algorithm~\ref{algo:sssp}. First, use the $L_{\infty}$ metric to measure the distance $|pq|$ of any two points $p,q$ in the plane. Second, use additively-weighted Voronoi diagrams in the $L_{\infty}$ metric instead. Note that constructing an additively-weighted Voronoi diagram for a set of $n$ points in the $L_{\infty}$ metric is equivalent to constructing a Voronoi diagram for a set of $n$ squares in the $L_{\infty}$ metric. This latter problem can be solved in $O(n\log n)$ time, e.g., by the algorithm of Papadopoulou and Lee~\cite{ref:PapadopoulouTh01}. As such, Algorithm~\ref{algo:sssp} can be modified to solve the $L_{\infty}$ metric case in $O(n\log n)$ time. The algorithm is much simpler than the one in \cite{ref:KlostAn23}. The $L_1$ case can be treated analogously.

\bibliography{references}

\appendix

\end{document}